\theoremstyle{plain}
\newtheorem{thm}{Theorem}
\theoremstyle{definition}
\newcommand{\CC}{{\mathbb C}}
\newcommand{\ZZ}{{\mathbb Z}}
\newcommand{\PP}{{\mathbb P}}
\newcommand{\Spec}{{\rm Spec}}
\newcommand{\Hom}{{\rm Hom}}
\newcommand{\GL}{{\rm GL}}
\newcommand{\D}{{\rm D}}
\title{Fourier Duality of Quantum Curves}
\author{}
\author{Martin T. Luu, Albert  Schwarz}
\date{}
\begin{document}

\newcommand{\Addresses}{{
\bigskip
\footnotesize

M. Luu, \textsc{Department of Mathematics, University of Illinois at Urbana-Champaign, IL 61801, USA} \par \nopagebreak \textit{E-mail address:} \texttt{mluu@illinois.edu}

\medskip

A. Schwarz, \textsc{Department of Mathematics, University of California, Davis, CA 95616, USA} \par \nopagebreak \textit{E-mail address:} \texttt{schwarz@math.ucdavis.edu} }}

\maketitle

\begin{abstract}
There are two different ways to deform a quantum curve along the flows of the KP hierarchy. We clarify the relation between the two KP orbits: In the framework of suitable connections attached to the quantum curve they are related by a local Fourier duality. As an application we give a conceptual proof of duality results in 2D quantum gravity. \end{abstract}

\section{Introduction}
One way to define a quantum curve, see for example \cite{SCH2}, is as a pair $(P,Q)$ of ordinary differential operators in $\CC[\![x]\!][\partial_{x}]$ such that
$$[P,Q]=\hbar$$ 
Here $\hbar$ might be viewed as a formal parameter or a fixed complex number depending on the situation of interest. For the remainder of this work we set $\hbar = 1$. We say that the quantum curve has bi-degree  $(p,q)$ if $P$ is a differential operator of order $p$ and $Q$ is a differential operator of order $q.$ We will work with scalar differential operators, however, our methods can be applied also to matrix differential operators. The notion of quantum curve originates from the fact that given a complex algebraic curve and two suitable functions $f_{1},f_{2}$ on it, it is known by work of Burchnall and Chaundy \cite{BC} and  Krichever \cite {KRI}, that one can construct two commuting scalar differential operators $P_{0}$ and $Q_{0}$. Hence an algebraic curve is related to the  classical situation $\hbar = 0$. 

Notice that there are two natural ways to deform a quantum curve along the flows of the Kadomtsev-Petviashvili (KP) hierarchy of partial differential equations. Namely, if $P$  is a normalized differential operator in the sense that is has the form
$\partial_{x}^{p}+a_{p-2}\partial_{x}^{p-2}+\cdots$, 
then we can define a pseudodifferential operator $L=P^{\frac{1}{p}}$. In terms of this operator we can define a family of quantum curves $(P(t_{1},\cdots,t_{p+q}),Q(t_{1},\cdots,t_{p+q}))$ of bi-degree $(p,q)$ solving the KP-equations
$$\frac{\partial }{\partial t_n} L(t_{1},\cdots,t_{p+q})=[L(t_{1},\cdots,t_{p+q})^{n}_+,L(t_{1},\cdots,t_{p+q})]$$
where the subscript $+$ denotes the differential part of the pseudodifferential operator. The operator $P(t_{1},\cdots,t_{p+q})$ is then obtained as $L(t_{1},\cdots,t_{p+q})^{p}$ and the operator $Q(t_{1},\cdots,t_{p+q})$ can be calculated by the equation
$$\frac{\partial }{\partial t_n} Q(t_{1},\cdots,t_{p+q})= [L(t_{1},\cdots,t_{p+q})^{n}_{+},Q(t_{1},\cdots,t_{p+q})]$$

Note that one can construct a larger family of quantum curves solving the same equations where now all the KP times $t_i$ for $i \ge 1$ are allowed to flow. However, in this generality the bi-degree of these curves will not be fixed. If $Q$ is a normalized operator we can define another family of quantum curves $(\hat P(\hat t_{1},\cdots,\hat t_{p+q}),\hat Q(\hat t_{1},\cdots,\hat t_{p+q}))$ solving the KP-equations. These curves are of bi-degree $(q,p)$ and are obtained by deforming the string equation $[Q,-P]=1$ along a second set $\hat t_{1}, \hat t_{2}, \cdots$ of KP times. Our goal is to determine the relation between the two KP orbits by showing in Theorem \ref{main-theorem} that they are related by a type of Fourier duality.

It can be seen that for both KP deformations there are a priori the same amount of relevant KP time variables and this makes it at least conceivable that there is a good correspondence between the two theories. However, the two corresponding KP $\tau$-functions appear to be very different and depend non-trivially on different numbers of the time variables. The beautiful consequence, which is useful also in 2D quantum gravity, is that the duality between the two KP deformations of the quantum curves allows to reduce the number of relevant time variables for one of the two families of quantum curves and hence leads to a simplified description.

To formulate the main result we notice that differential operators with power series coefficients can be regarded as operators acting in the space of polynomials $\mathbb{C}[z]$. Namely, we assume that
$\partial_{x}$ acts as multiplication by $z$ and multiplication by $x$ acts as $-\frac{\textrm{d}}{\textrm{d}z}$. It is clear that this construction gives us a structure of $\bf D$-module on $\mathbb{C}[z]$ where $\textrm{\textbf{D}}=\CC[x,\partial_{x}]$ is the one-variable Weyl algebra. In other words, one obtains a D-module on the plane $\mathbb{A}^{1}$. Using this fact and assuming that the operator $P$ is monic we can assign a connection to a quantum curve $(P, Q)$. To define this connection  we can consider a matrix $M$ of the operator $Q$ in $P$-basis. Here we are saying that the elements $e_1, ..., e_p$ of $\CC[z]$ form a $P$-basis if elements $P^ke_i$ form a basis. If $P$ is monic a $P$-basis exists: we can take, for example, $e_i=z^{i-1}$. In other words, in this case the operator $P$ specifies a structure of free module over the ring of polynomials. The matrix $M$ (the companion matrix of the quantum curve in the terminology of \cite{SCH2}) can be considered as a matrix with polynomial coefficients. It specifies a connection 
$$\nabla = \frac{\textrm{d}}{\textrm{d}u}-M(u)$$
We consider it as a connection on the punctured formal disc $D^{\times}$ centered at the point $\infty$, meaning that we choose a presentation of the punctured disc as $D^{\times} \cong \textrm{Spec } \CC(\!(1/u)\!)$. 

If the operator $Q$ is also monic, we can define another connection using the same construction; we prove that these two connections are related by local Fourier transform. (Note that the local Fourier transform is defined only up to gauge equivalence, hence it would be more accurate to say that the local Fourier transform relates equivalence classes of connections.) Notice that this statement is true also for  quantum curves specified by matrix differential operators.

To state the precise relation obtained by the Fourier transform it is best to work with the whole KP orbits. We prove in Theorem \ref{main-theorem} that the family of connections corresponding to the first family of quantum curves $ (P(t_{1},\cdots,t_{p+q}),Q(t_{1},\cdots,t_{p+q}))$ is related by local Fourier transform with the family of connections corresponding to the family of quantum curves $(\hat P(\hat t_{1},\cdots,\hat t_{p+q}),\hat Q(\hat t_{1},\cdots,\hat t_{p+q}))$ where the parameters $(\hat t_{1},\cdots, \hat t_{p+q})$ are expressed in terms of the parameters $(t_{1},\cdots,t_{p+q})$ in the following way: 

Let $s$ be an indeterminate. For $p,q \ge 1$, we say two functions $f_{1} \in \CC(\!(1/s^{1/p})\!)$ and $f_{2} \in \CC(\!(1/s^{1/q})\!)$ are compositional inverses up to gauge equivalence if 
$$f_{1}^{-1} = f_{2} + \textrm{ element of } s^{-1}\CC[\![s^{-1/q}]\!]$$
$$f_{2}^{-1} = f_{1} + \textrm{ element of } s^{-1}\CC[\![s^{-1/p}]\!]$$
Then the following two functions are inverses up to gauge equivalence:
$$f_{1}=\frac{1}{p} \cdot \sum_{k=1}^{p+q} k t_{k}s^{\frac{k-p}{p}} \;\;\; \textrm{ and } \;\;\; f_{2}=-\frac{1}{q} \cdot \sum_{k=1}^{p+q} k \hat t_{k}s^{\frac{k-q}{q}}$$
This is sufficient to describe each set of parameters in terms of the other one. The gauge ambiguity of the inverse functions comes from the fact that the Fourier transform only relates gauge equivalence classes of connections.

In Section \ref{physics-section} we show that the duality relation between the quantum curves can be applied to give conceptual proofs of the duality in 2D quantum gravity. This duality corresponds to a certain change in the matter content of the theory and is called T -- duality or p -- q duality. The latter notation stems from the fact that for positive co-prime integers $p$ and $q$ the duality relates the partition function of the $(p,q)$ minimal model coupled to gravity to the corresponding function of the $(q,p)$ minimal model coupled to gravity. 

Previous approaches \cite{FKN}, \cite{KM} by physicists were of more computational nature. It seemed clear that a more conceptual approach would be useful. The first step was taken in \cite{LUU}, where we showed how to describe the duality as a local Fourier duality of connections on the formal punctured disc. However, the proofs were still of computational nature. As a consequence of our general duality results for quantum curves obtained in the present work, the reason underlying the precise dynamics of the duality can now be much better understood as consequence of general properties of the local Fourier transform.  

Consider for example the duality between the $(5,2)$ model 
and the $(2,5)$ model of 2D quantum gravity. The former depends a 
priori on the time variables $t_{1}, \cdots ,t_{7}$ and the latter 
depends a priori on the time variables $\hat t_{1},\cdots, \hat t_{7}$. It is known that the $\tau$-function of the $(5,2)$ model 
depends trivially on the variable $t_{5}$ and the $\tau$-function of 
the $(2,5)$ model depends trivially on the variables $\hat t_{2},\hat 
t_{4},\hat t_{6}$. The duality therefore leads to a simplified 
description of the $(5,2)$ model. 

As a consequence of the duality 
the two sets of times can be expressed in terms of each other. For 
example we will give a proof based on the local Fourier transform that the dynamics of the 2 -- 5 duality is given by
\begin{eqnarray*}
t_{1} &=& \hat t_{1}+ \frac{5}{8} \cdot \hat t_{5}^3 - \frac{27}{8} \cdot \hat t_{5}^{2}\cdot \hat t_{6}^{2} + \frac{567}{200} \cdot \hat t_{5}\cdot \hat t_{6}^{4} + \frac{18}{5} \cdot \hat t_{4} \cdot \hat t_{5} \cdot \hat t_{6}  - \frac{3}{2} \cdot \hat t_{3} \cdot \hat t_{5} - \frac{15309}{25000} \cdot \hat t_{6}^6 - \frac{54}{25} \cdot \hat t_{4} \cdot \hat t_{6}^{3} \\
&&\;\;\;\;  + \frac{81}{50}\cdot \hat t_{3} \cdot \hat t_{6}^{2} - \frac{6}{5} \cdot \hat t_{2}\cdot \hat t_{6}  - \frac{4}{5} \cdot \hat t_{4}^{2}\\
\\
t_{2}&=&\hat t_{2}+ 3 \cdot \hat t_{5}^2 \cdot \hat t_{6} -  \frac{108}{25}\cdot \hat t_{5}\cdot\hat t_{6}^{3} - 2 \cdot \hat t_{4} \cdot \hat t_{5} + \frac{3888}{3125} \cdot \hat t_{6}^5 + \frac{72}{25} \cdot \hat t_{4} \cdot \hat t_{6}^{2} - \frac{9}{5} \cdot \hat t_{3} \cdot \hat t_{6}\\
\\
t_{3}&=&\hat t_{3} -\frac{5}{4}\cdot \hat t_{5}^{2}-\frac{12}{5} \cdot\hat t_{4} \cdot \hat t_{6}+\frac{9}{2} \cdot \hat t_{5} \cdot \hat t_{6}^{2}-\frac{189}{100} \cdot \hat t_{6}^{4}\\ 
\\
t_{4}&=& \hat t_{4} -3\cdot \hat t_{5} \cdot \hat t_{6} + \frac{54}{25} \cdot \hat t_{6}^{3}\\
\\
t_{5}&=&\hat t_{5} - \frac{9}{5} \cdot \hat t_{6}^{2}\\
\\
t_{6} &=& \hat t_{6}\\
\\
t_{7}&=& \hat t_{7} \cdot \frac{-5}{2}
\end{eqnarray*}

The shape of such coefficients occurring in relating the time flows in the two dual theories will be deduced from general properties of the local Fourier transform of connections on the formal punctured disc. As is shown in \cite{LUU2}, these are complex analogues of the symmetries underlying the numerical local Langlands duality for $\GL_{n}$ over local fields.

Our results are based on \cite{SCH2}, however, the familiarity with this paper is not necessary for the reading of present paper: all definitions and statements we are using are repeated.

\section{Quantum curves}
Suppose $(P,Q)$ is a bi-degree $(p,q)$ quantum curve. Using the action on $\CC[z]$ described  in the introduction, we can construct two $\D$-modules associated to this curve, meaning two representations  of the Weyl algebra $\textrm{\textbf{D}}=\CC[x,\partial_{x}]$. The first one $\D_{1}(P,Q)$ is given by 
$$\partial_{x} \mapsto P \; \;  \;\textrm { and } \;\; \; x\mapsto Q$$
and the second one $\D_{2}(P,Q)$ is given by
$$\partial_{x} \mapsto -Q \;\;\; \textrm{ and } \; \; \;
x \mapsto  P$$
Hence, the first $\D$-module is the analogue of the second $\D$-module but for the string equation
$[Q, - P]=1$
instead of
$[P, Q]=1$. 
In other words, one has
$$\D_{1}(Q,-P)=\D_{2}(P,Q)$$
The global Fourier transform $\mathcal F^{\textrm{glob}}$ of a $\D$-module is the $\D$-module obtained by composing with the map
$$\partial_{x} \mapsto  x \; \; \textrm{ and } \; \; x \mapsto -\partial_{x}$$
It is clear from the definitions that the two $\D$-modules are globally Fourier dual:
$$\mathcal F^{\textrm{glob}}  \big [ \D_{2}(P,Q) \big ] \cong \D_{1}(P,Q)$$
This global Fourier duality will be crucial in our approach to the duality of KP orbits of quantum curves.

\subsection {KP-flows}
\label{KP-flows-section}
In this section we describe, following \cite{SCH2}, the KP-flows on the space of quantum curves. Via the relevant dressing operators attached to a solution to the string equation $[P,Q]=1$, this is related to the fact that KP-flows can be described geometrically in terms of the Sato Grassmannian. We start by recalling this:

As a set, the big cell $Gr$ of the Sato Grassmannian consists of the $\CC$-subspaces of $\mathcal{H}=\CC(\!(1/z)\!)$ that are comparable to $\mathcal H^{+}=\CC[z]$. This simply means that the projection map to $\mathcal H^{+}$ that takes a Laurent series to its polynomial part is an isomorphism. The algebra of pseudodifferential operators $\CC[\![x]\!](\!(\partial_{x}^{-1})\!)$ acts on the space of Laurent series $\mathcal{H}=\CC(\!(1/z)\!)$ in the following manner: For $f(z)\in \CC(\!(1/z)\!)$ let
$ (x^{m} \partial_{x}^{n})f(z)= \left(-\frac{\textrm{d}}{\textrm{d}z} \right)^{m} (z^{n}f(z))$.  For differential operators this action was described in the introduction. Note that differential operators can be characterized as pseudodifferential operators transforming $\CC[z]$ into itself.

Let us review shortly the elements of Sato theory and the results of \cite {SCH2}. The commutative Lie algebra $\gamma_+$ of polynomials $\sum_{k\geq 0} t_kz^k$ acts on $\mathcal{H}$ by means of multiplication operators, hence it acts on $Gr$ in a natural way. There exists a one-to-one correspondence between the elements of  the group $\mathcal{G}$ of monic zeroth  order pseudodifferential operators and points of $Gr$. Namely, every subspace $V\in Gr$ has a unique representation in the form $V=S\mathcal{H}_+$ where $S\in \mathcal{G}.$
It follows that $\gamma_+$ acts also on $\mathcal{G}$: 
\begin{equation}
\label{s}
\frac{\partial S}{\partial t_n}=(S\partial_{x}^nS^{-1})_-S
\end{equation}
where the subscript ``$-$'' denotes the part of the pseudodifferential operator involving negative powers of $\partial_{x}$.

Suppose now that $P$ is a normalized pseudodifferential operator of order $p$, meaning it is of the form
$$P=\partial_{x}^{p}+a_{p-2}\partial_{x}^{p-2}+\cdots$$
Every such operator can be represented in the form $ S^{-1}\partial_{x}^pS$ where $S\in \mathcal{G}$ and this representation is unique up to multiplication by an operator with constant coefficients. Using this statement  we can construct the action of Lie algebra $\gamma_+$  on the space of normalized pseudodifferential operators differentiating the relation $P(t)= S^{-1}(t)\partial_{x}^{p}S(t)$ with respect  to $t_n$. As indicated in the introduction, the action on this space can be written in the form of differential equations
\begin{equation}
\label{kp}
\frac{\partial P }{\partial t_n}=[P^{\frac{n}{p}}_+,P]
\end{equation}
Notice that this formula determines also the action of Lie algebra  $\gamma_+$ on the space of normalized differential operators. All actions we described  can be considered as different forms of the KP hierarchy. Integrating the actions of Lie algebra $\gamma_+$
we obtain the action of a commutative group $\Gamma_+$, the elements of this group can be written in the form $g(t)=\exp(\sum_{k\geq 0} t_kz^k).$

We now come back to our aim of describing the KP-flows on the space of quantum curves.
We would like to solve the string equation $[P,Q]=1$. We will recall the relevant results from \cite{SCH2}, but in contrast to loc. cit. we will assume that $P$ is a normalized differential operator instead of $Q$. One can say that we apply the results of \cite{SCH2} to the string equation $[-Q,P]=1$. This normalization is useful when working with the $\D$-module $\D_{2}(P,Q)$ that we defined earlier. 

As we mentioned, we can construct the operator $S\in \mathcal{G}$ such that $SPS^{-1}=\partial_{x}^p$. Introducing the notation
$V=S\mathcal{H}_+$ we obtain a subspace $V\in Gr$ invariant with respect to multiplication by $z^p$ and with respect to the action of the operator $-\tilde Q= \frac{1}{pz^{p-1}}\frac{\textrm{d}}{\textrm{d}z}+b(z)$ where $b(z)$ stands for the multiplication by a Laurent series denoted by the same letter. Here we use the fact that the action of $\partial_{x}^p$ can be interpreted as multiplication by $z^p$ and the fact that $\mathcal{H}_+$ is invariant with respect to the action of differential operators. The form of the operator $-\tilde Q=S(-Q)S^{-1}$ follows from the relation $[-\tilde Q,\tilde P]=1$ where $\tilde P=SPS^{-1}=\partial_{x}^p$. We can invert this consideration to obtain the following statement, see \cite {SCH2}: If the point $V$ of the big cell of the Grassmannian satisfies 
$$z^p V \subseteq V  \;\;\; \textrm{ and } \;\;\; \left(\frac{1}{pz^{p-1}}\frac{\textrm{d}}{\textrm{d}z}+b(z)\right) V \subseteq V$$ 
we can construct a differential operator $P$ and a normalized differential operator $Q$ obeying $[P,Q]=1$. The leading term of the operator $Q$ is determined by the leading term of the Laurent series $b(z)$. 

To summarize, finding a solution to the string equation is equivalent to finding a suitable point of the Sato Grassmannian stabilized by two operators of the above form.

It follows that the Lie algebra $\gamma _+$ acts  on the space of pairs $(P,Q)$ of differential operators obeying $[P,Q]=1$ under the assumption that $P$ is a normalized operator. The proof is based on the remark that for $V$ as above, the space 
$$V(t):=g(t)V  \;\;\; \textrm{ where } \;\;\; g(t)=\exp \left (\sum_{k\geq 0} t_kz^k \right )\in \Gamma_+$$
satisfies a similar condition as $V$ but with $-\tilde Q$ replaced by $g^{-1}(t)(-\tilde Q) g(t)$: Namely, one has
\begin{equation}
\label{b}
z^p V(t) \subseteq V(t)  \;\;\; \textrm{ and } \;\;\; \left(\frac{1}{pz^{p-1}}\frac{\textrm{d}}{\textrm{d}z}+b(z)-\sum \frac{k}{p}t_kz^{k-p} \right) V(t) \subseteq V(t)
\end{equation}
It follows that the KP-flows (\ref {kp}) are defined on the space of quantum curves. 

To simplify the formulas that we will obtain in Theorem \ref{main-theorem}, we normalize the KP times so that at times $t_{1},t_{2},\cdots$ the point $V(t)$ of the KP orbit of $V$ is stabilized by $z^{p}$ and $\frac{1}{pz^{p-1}}\frac{\textrm{d}}{\textrm{d}z}-\sum \frac{k}{p}t_kz^{k-p}$.  

\subsection{Companion matrix connections}
We now describe in detail how to attach a connection on the formal punctured disc to a quantum curve. Recall first that the category $\textrm{Conn}(D^{\times})$ of connections on the formal punctured disc $D^{\times}$ can be defined as follows: After choosing an isomorphism $D^{\times} \cong \Spec \; \CC(\!(t)\!)$ each object in this category can be described as a pair $(M,\nabla)$ where $M$ is a finite dimensional $\CC(\!(t)\!)$-vector space and $\nabla$ is  a $\CC$-linear map 
$$\nabla : M \longrightarrow M \;\;\; \textrm{ such that } \;\;\; \nabla(f \cdot m) = f \cdot \nabla(m) + \frac{\textrm{d} f}{\textrm{d} t} \cdot m$$
for all $f \in \CC(\!(t)\!)$ and all $m \in M$. The morphisms in the category $\textrm{Conn}(D^{\times})$ are $\CC$-linear maps that also commute with the maps $\nabla$. 

The category $\textrm{D-Mod}$ of left ${\bf D}=\CC[x,\partial_{x}]$-modules can be viewed as the category of $\D$-modules over $\PP^{1}(\CC) \backslash \{\infty\}$. Hence the $\D$-modules $\D_{1}$ and $\D_{2}$ that we associated to a quantum curve are $\D$-modules on the plane $\PP^{1}(\CC)\backslash \{\infty\}$.

Let $\textrm{Hol}(\textrm{D-Mod})$ denote the full sub-category of holonomic $\D$-modules on  $\PP^{1}(\CC)\backslash \{\infty\}$. There is a restriction functor that captures the local information near $\infty$:
$$\psi_{\infty} : \textrm{Hol}(\textrm{D-Mod}) \longrightarrow \textrm{Conn}(D^{\times})$$
It is given on objects by
$$N \longmapsto N \otimes_{\CC[x]} \CC(\!(t)\!) \; \; \;\textrm{ with } \;\;\; t=\frac{1}{x}$$
where the $\CC(\!(t)\!)$-vector space structure comes from the second factor and the map
$$\nabla : \psi_{\infty} N \longrightarrow \psi_{\infty} N$$
is given by
$$n \otimes f \mapsto (\partial_{x} \cdot n) \otimes (-\frac{1}{t^{2}} f) + n \otimes \frac{\textrm{d}}{\textrm{d}t} f$$ 

Applying the functor $\psi_{\infty}$ to $\D$-modules related by the global Fourier transform we obtain connections related by local Fourier transform. This statement could be considered as a definition of the local Fourier transform for connections that can be obtained from $\D$-modules by means of the functor  $\psi_{\infty}$. See the discussion preceding Theorem \ref{main-theorem} for a precise statement of the result.

Let us consider  a solution to the string equation $[P,Q]=1$ with $P$ normalized. For $SPS^{-1}=\partial_{x}^p$ let $V=S\mathcal{H}_+$ be the associated point of the Grassmannian. Suppose $e_0,...,e_{p-1}$ is a $P$-basis of  $\mathcal{H}_+$, meaning that the collection of elements $P^{k}e_{i}$ for varying $k$ and $i$ is a $\CC$-basis. Under this assumption one has that $v_0, ....,v_{p-1}$ with $v_i=Se_i$ is a $z^p$-basis of $V$.  
We obtain
\begin{equation}
\label{ }
\tilde Pv_i=M^j_i(z^p)v_j
\end{equation}
where the entries of the matrix $M$ are polynomials with respect to $z^p$.  The matrix $M$  coincides with  the companion matrix of the pair $(P,Q)$ defined as a matrix of $P$ in $Q$-basis. 

We apply these constructions to the $\D$-modules coming from a bi-degree $(p,q)$ quantum curve $(P,Q)$ where $P$ is monic. The D-module $\D_{2}(P,Q)$ is holonomic and we can apply the functor $\psi_{\infty}$ to obtain a $p$-dimensional connection. One sees that
$$\mathcal B := \{1\otimes 1, \cdots, z^{p-1} \otimes 1\}$$
is a $\CC(\!(t)\!)$-basis of the vector space $\D_{2}(P,Q) \otimes_{\CC[x]} \CC(\!(t)\!)$. The connection is nothing but the companion matrix connection introduced in \cite{SCH2} and we will denote it by $\nabla_{M(P,Q)}$. 

Since on $\D_{2}(P,Q)$ we defined the $\partial_{x}$ action to be the action of $-Q$, it follows that apart from the factor $-1/t^{2}$, the matrix of the action of $\partial_{x}$ with respect to the basis $\mathcal B$  is the matrix of the action of $-Q$ with respect to the $P$-basis $\{1, z, \cdots, z^{p-1}\}$ of $\CC[z]$. Hence this is simply $-M(P,Q)$. Hence, by writing the connection with respect to the basis $\mathcal B$, one can write
$$\D_{2}(P,Q) \otimes_{\CC[x]} \CC(\!(t)\!) \cong \left (\CC(\!(t)\!)^{p}, \frac{\textrm{d}}{\textrm{d}t} -M(P,Q)(1/t)\cdot (-1/t^{2}) \right )$$
Or, more naturally, when we write this connection in terms of the coordinate $u$ such that the punctured disc is given by $D^{\times} \cong \textrm{Spec }\CC(\!(1/u)\!)$ the connection is simply described by
$$\frac{\textrm{d}}{\textrm{d}u}-M(P,Q)$$
Hence, this is the companion matrix connection $\nabla_{M(P,Q)}$.  Furthermore,  if $p$ and $q$ are co-prime it is known, see \cite{SCH2}, that one obtains an irreducible connection.

\subsection{Classification of connections}
The isomorphism classes of connections have been classified in work of Levelt and Turrittin, see for example \cite{BV} for a detailed exposition. Therefore, one can ask what the description of the connection of the quantum curve in terms of this classification is. In order to answer this, we describe the Levelt-Turrittin classification. 

Sometimes it is convenient to describe a connection with respect to a choice of basis of the vector space $M$. Then the map $\nabla$ is simply of the form
$$\frac{\textrm{d}}{\textrm{d}t}+ A(t)$$
with $A(t) \in \mathfrak g \mathfrak l_{n}\CC(\!(t)\!)$. This classification implies in particular that given an irreducible connection one can simplify the connection matrix $A(t)$ if one changes coefficients from $\CC(\!(t)\!)$ to some suitable finite extension $\CC(\!(t^{1/q})\!)$. For the extended coefficients there is $g \in \GL_{n}(\CC(\!(t^{1/q})\!))$ such that with respect to the new basis obtained by multiplying by $g$ the old one, the description of the connection becomes
$$\frac{\textrm{d}}{\textrm{d}t}+ gA(t)g^{-1} +g \cdot  \frac{\textrm{d}}{\textrm{d}t} g^{-1}$$
with
$$gA(t)g^{-1} + g \cdot \frac{\textrm{d}}{\textrm{d}t} g^{-1} =  \begin{bmatrix}
f_{1} & & &\\
&\ddots &&\\
&&f_{n}
\end{bmatrix}$$
for some functions $f_{i} \in \CC(\!(t^{1/q})\!)$. This type of simplification is conveniently described in terms of the push-forward and pull-back of connections along suitable maps. Given a map 
$$\rho : \CC[\![t]\!] \longrightarrow \CC[\![u]\!]$$
which takes $t$ to some element in $u \CC[\![u]\!]$ one can define associated push-forward and pull-back operations on the categories of connections on the formal punctured disc with local coordinate $t$ and $u$ respectively. Following the notation in \cite{FAN} we denote by $[i]$ for $i \in \ZZ^{\ge 1}$ the map $\rho$ that takes $t$ to $u^{i}$. If $N$ is a $d$-dimensional connection over $\CC(\!(u)\!)$ then $[i]_{*}N$ is a $d \cdot i$-dimensional connection over $\CC(\!(t)\!)$. If $M$ is a $d$-dimensional connection over $\CC(\!(t)\!)$ then $[i]^{*}M$ is a $d$-dimensional connection over $\CC(\!(u)\!)$. 

Now suppose $f\in \CC(\!(t^{1/q})\!)$ and $q$ is the minimal positive such integer. The connection denoted by $E_{f}$ in \cite{GRA} denotes the push-forward along the map $\CC(\!(t^{1/q})\!) \longrightarrow \CC(\!(t)\!)$ that takes $t^{1/q}$ to $t$ of the one-dimensional connection
$$\frac{\textrm{d}}{\textrm{d}t^{1/q}}+ qt^{(q-1)/q}f(t^{1/q})$$
over $\CC(\!(t^{1/q})\!)$. The classification result is then that every irreducible connection is isomorphic to some $E_{f}$ and $f$ is uniquely determined up to adding an arbitrary element in
$$\frac{1}{q}\cdot \ZZ + t^{1/q}\CC[\![t^{1/q}]\!]$$ 
We will also use Fang's results \cite{FAN} later on and there a different notation is used, so we introduce it now: 

Given a function $\alpha \in \CC(\!(t)\!)$ one denotes by $[\alpha]$ the connection 
$$(\CC(\!(t)\!), \frac{\textrm{d}}{\textrm{d}t}+ \frac{\alpha}{t})$$
The Levelt-Turrittin classification in this language implies that every irreducible connection is isomorphic to one of the form 
$$[q]_{*}([t\partial_{t}(\alpha)] \otimes_{\CC(\!(t)\!)} R)$$
where $R$ is a suitable regular connection
$$R=(\CC(\!(t)\!), \frac{\textrm{d}}{\textrm{d}t}+\frac{r}{t}) \;\;\; \textrm{ for some }r \in \CC$$
and the tensor product of two connections $(V_{i},\nabla_{i})$ has underlying vector space $V_{1} \otimes_{\CC(\!(t)\!)} V_{2}$ and the connection is given via
$$v_{1} \otimes v_{2} \mapsto \nabla_{1}(v_{1}) \otimes v_{2} + v_{1} \otimes \nabla_{2}(v_{2})$$
for all $v_{1} \in V_{1}$ and $v_{2} \in V_{2}$.

In order to conform with usual notational conventions for Kac-Schwarz operators, it will often be useful to express connections on the formal punctured disc with coordinate $u$ in terms of the reciprocal coordinate $z=1/u$.  Hence, for $h(z) \in \CC(\!(1/z)\!)$ we use the notation
$$(\CC(\!(u)\!), \frac{
\textrm{d}}{\textrm{d} z} + h(z)) := (\CC(\!(u)\!), \frac{
\textrm{d}}{\textrm{d} u} - \frac{1}{u^{2}} h(1/u))$$
Note that in the following we will write $u$ for $1/z$ without further comment.

\subsection{Companion matrices and Levelt-Turrittin normal form}
\label{KP-normal-section}
Our goal is to analyze the behavior of KP-flows under duality $(P,Q)\to (-Q,P).$ We know already that on the connection corresponding to the companion matrix $M$ this duality acts as local Fourier transform. From the other side we know  the action of KP-flows on $b$ (see (\ref{b})).
 
We now describe the relation between $b(z)$ and $M$ in the case  when $P$ is normalized assuming that $p$ and $q$ are co-prime. In this case the companion matrix connection $\nabla_{M(P,Q)}$ is irreducible. It follows from the Levelt-Turrittin classification and the irreducibility of $\nabla_{M(P,Q)}$ that
$$[p]^{*}\nabla_{M(P,Q)} \cong (\CC(\!(u)\!)^{p}, \frac{\textrm{d}}{\textrm{d}z}  + \begin{bmatrix}
\xi_{1}(z) &&& \\
&\ddots && \\
&&&\xi_{p}(z)
\end{bmatrix} )$$
for suitable $\xi_{i} \in \CC(\!(z)\!)$ that satisfy
$$\xi_{i}(z) \in \frac{1}{z}\CC[\![z]\!]$$
It is shown in \cite{SCH} that up to a $p$'th root of unity the coefficient of $1/z$ is given by $(1-p)/2$.
Moreover, it is known that the gauge transformation can be taken to be of the form
$$R(z) \in \GL_{p}(\CC[\![1/z ]\!])$$
Let $R_{i,j}(z)$ denote the $(i+1,j+1)$ entry of $R(z)$ and let
$$t_{i}(z)=  \exp \left (\int b(z) pz^{p-1} \textrm{d}z \right ) \sum_{j=0}^{p-1} R_{i,j}(z)  \frac{v_{j}}{z^{j}} $$
One obtains that
$$\frac{1}{pz^{p-1}}\frac{\textrm{d}}{\textrm{d}z}  t_{i}(z) = \Lambda_{i}(z) t_{i}(z) \;\;\; \textrm{ where } \;\;\; \Lambda_{i}(z) = \frac{\xi_{i}(z)}{pz^{p-1}}$$
Let the constants $c_{i}$ be such that
$$t_{i}(z)=c_{i} \exp(\int \Lambda_{i}(z)pz^{p-1} \textrm{ d}z)$$
Since
$$v_{i}=z^{i} + \textrm{ lower order terms }$$
it follows that
$$\begin{bmatrix}
c_{1} \exp(\int (\Lambda_{1}(z) -b(z))pz^{p-1}\textrm{ d}z) \\
\vdots\\
c_{n} \exp(\int (\Lambda_{n}(z) -b(z))pz^{p-1}\textrm{ d}z)
\end{bmatrix} = R\cdot \begin{bmatrix}
1 + \mu_{1,1}z^{-1}+ \cdots \\
\vdots\\
1+\mu_{n,1}z^{-1}+\cdots
\end{bmatrix}
$$
for suitable constants $\mu_{i,j}$. Note that since $R$ is invertible not all $c_{i}$'s can vanish. It also follows from the above equation that for each $i$ such that $c_{i} \ne 0$ one needs
$$\Lambda_{i}(z)-b(z) \in \frac{1}{z^{p+1}} \CC[\![1/z]\!]$$
Since 
$\Lambda_{i}(z) \in \frac{1}{z^{p}} \CC[\![z]\!]$
and since
$b(z) \in \frac{1}{z^{p}} \CC[\![z]\!]$
it follows that
$$b(z)=\Lambda_{i}(z)$$
This holds for all $i$ such that $c_{i}\ne 0$. Now define a one-dimensional connection over $\CC(\!(u)\!)$ by
$$\nabla_{\textrm{KS}}:= (\CC(\!(u)\!), \frac{\textrm{d}}{\textrm{d}z} + \tilde b(z)) \;\;\; \textrm{ where } \;\;\;
\frac{1}{pz^{p-1}}\tilde b(z) = b(z)$$
It follows from the previous calculations that
$$\Hom_{\CC(\!(u)\!)}(\nabla_{\textrm{KS}},[p]^{*}\nabla_{M(P,Q)}) \ne (0)$$ 
Furthermore, one can define a connection on this space of homomorphisms and via the projection formula for pull-back and push-forward, see for example \cite{SAB} (Section 1), one has
$$[p]_{*} \Hom_{\CC(\!(u)\!)}(\nabla_{\textrm{KS}},[p]^{*}\nabla_{M(P,Q)}) \cong \Hom_{\CC(\!(u^{p})\!)}([p]_{*}\nabla_{\textrm{KS}}, \nabla_{M(P,Q)})$$ 
Since $\nabla_{M(P,Q)}$ is irreducible it follows that
$$\nabla_{M(P,Q)}\cong [p]_{*} \nabla_{\textrm{KS}}$$
as desired.

It can be seen from the description of the KP flows in Section \ref{KP-flows-section} and the calculations in \cite{SCH} that the KP times $t_{1},\cdots,t_{p+q}$ can be normalized so that 
$$ \nabla_{M(P(t_{1},\cdots,t_{p+q}),Q(t_{1},\cdots,t_{p+q}))}\cong [p]_{*}\left( (\CC(\!(u)\!), \frac{\textrm{d}}{\textrm{d}z} - pz^{p-1}(\frac{1-p}{2p} \frac{1}{z^{p}}+ \frac{1}{p}\sum_{i=1}^{p+q} i t_{i} z^{i-p})) \right )$$
In this normalization, the connection only depends on $p$ and $q$ and we will denote it by $\nabla_{M_{p,q}(t_{1},\cdots,t_{p+q})}$.

\section{Duality of quantum curves}
As explained previously, if $P$ is normalized one can deform a bi-degree $(p,q)$ quantum curve $[P,Q]=1$ along the first $p+q$ flows of the KP hierarchy.  If $Q$ is normalized we have another deformation coming from the quantum curve $[-Q,P]=1. $ We would like to consider the case when after the deformation both $P$ and $Q$ remain normalized. As can be seen from the arguments in Section \ref{KP-flows-section}, this means in particular that no time evolution is taking place along the flows of the $p+q$'th and $p+q-1$'th KP times. More precisely, as can be seen from the formula at the end of Section \ref{KP-flows-section}, this formally corresponds to the following constraints for the KP times $t_{1},\cdots,t_{p+q}$ of the first string equation and the KP times $\hat t_{1},\cdots, \hat t_{p+q}$ of the second string equation:
$$t_{p+q}= \frac{p}{p+q}=-\frac{p}{q} \cdot \hat t_{p+q}$$
$$t_{p+q-1}=0=\hat t_{p+q-1}$$
Our aim is to relate the companion matrix connections $\nabla_{M_{p,q}(t_{1},\cdots,t_{p+q})}$ and $\nabla_{M_{q,p}(\hat t_{1},\cdots, \hat t_{p+q})}$ associated with the two deformations.

Our main tool is the local Fourier transform: Bloch and Esnault \cite{BE} and Lopez \cite{LOP} developed an analogue over $\CC$ of the $\ell$-adic local Fourier transform constructed by Laumon in \cite{LAU}. See also the work of Arinkin \cite{ARI} for an alternate approach. 

The stationary phase principle for the local Fourier transform is crucial for our proof of the main theorem. This is a central part of the theory of local Fourier functors. In fact, from its very introduction in the $\ell$-adic context, the guidance in defining the local Fourier transform is the search for a stationary phase principle for the global Fourier transform. We now describe the relevant results. 

Fix a point $\infty \in \PP^{1}(\CC)$ and let $(M,\nabla)$ be a  connection on the formal punctured disc centered at $\infty$. A Katz extension $\mathcal M$ of this connection is a certain connection on the punctured plane $\PP^{1}(\CC)\backslash \{0,\infty\}$ which is regular singular at $0$. See for example \cite{BBDE}, Theorem 2.8, for a precise definition of this extension. For the connections of interest to our considerations, namely the irreducible objects of $\textrm{Conn}(D^{\times})$ this can be described in the following manner: Consider the connection 
$$E_{f} \;\;\; \textrm{ with } \;\;\; f= \sum_{ i \gg -\infty} a_{i} t^{i/p}$$
The Katz extension $\mathcal E_{f}$ of this connection is then simply given by 
$$\frac{\textrm{d}}{\textrm{d} t} + \sum_{ -1 \ge i \gg -\infty} a_{i} t^{i/p}$$
It is clear from the description of Katz extension in \cite{BBDE} 
that a Katz extension of the previously introduced companion matrix connection $\nabla_{M(P,Q)}$ is given by $\D_{2}(P,Q)$.

Let $\textrm{Conn}(D^{\times})^{>1}$ denote the full subcategory of $\textrm{Conn}(D^{\times})$ consisting of connections with slopes strictly bigger than $1$. The condition $\textrm{slope}(E_{f}) >1$ simply means that there is an $i < -p$ with $a_{i}\ne 0$. 
The local Fourier transform is then a functor
$$\mathcal F^{(\infty,\infty)}: \textrm{Conn}(D^{\times})^{>1} \longrightarrow \textrm{Conn}(D^{\times})^{>1}$$
defined in the following way: Consider the Katz extension  of the connection as a $\D$-module, take global Fourier, apply the functor $\psi_{\infty}$. 
In particular, see (\cite{BE}, Proposition 3.12 (v)),
$$\psi_{\infty} \left( \mathcal F^{\textrm{glob}}(\mathcal E_{f}) \right )\cong \mathcal F^{(\infty,\infty)}(E_{f})$$
This is an incarnation of the stationary phase principle for the Fourier transform and it is a key tool in proving the following theorem. Note that this result relates two connections up to gauge equivalence. Furthermore, it is useful to recall the following definition from the introduction: For $p,q \ge 1$, $f_{1} \in \CC(\!(1/s^{1/p})\!)$ and $f_{2} \in \CC(\!(1/s^{1/q})\!)$ are compositional inverses up to gauge equivalence if 
$f_{1}^{-1} = f_{2} + \textrm{ element of } s^{-1}\CC[\![s^{-1/q}]\!]$ and
$f_{2}^{-1} = f_{1} + \textrm{ element of } s^{-1}\CC[\![s^{-1/p}]\!]$. For the statement of the following result recall that we defined the suitably normalized family of connections $\nabla_{M_{p,q}(t_{1},\cdots,t_{p+q})}$ associated to differential operators $P$ and $Q$ at the end of Section \ref{KP-normal-section}.
\begin{thm}
\label{main-theorem}
Let $p$ and $q$ be positive co-prime integers and consider a quantum curve $(P,Q)$ of bi-degree $(p,q)$ with $P$ and $Q$ normalized. Let $t_{1},t_{2},\cdots$ and $\hat t_{1},\hat t_{2},\cdots$ be two sets of KP times such that
$$t_{p+q}= \frac{p}{p+q}=-\frac{p}{q} \cdot \hat t_{p+q}$$
and
$$t_{p+q-1}=0=\hat t_{p+q-1}$$
Then 
\begin{equation}
\label{mt}
\mathcal F^{(\infty,\infty)} \nabla_{M_{p,q}(t_{1},\cdots,t_{p+q})} \cong \nabla_{M_{q,p}(\hat t_{1}, \hat t_{2},\cdots, \hat t_{p+q})}
\end{equation}
where the relation between the times $t_{1},\cdots, t_{p+q}$ and $\hat t_{1}, \cdots, \hat t_{p+q}$ is given by the fact that the following two functions are inverses up to gauge equivalence:
$$f_{1}=\frac{1}{p} \cdot \sum_{k=1}^{p+q} k t_{k}s^{\frac{k-p}{p}}$$
and
$$f_{2}=-\frac{1}{q} \cdot \sum_{k=1}^{p+q} k \hat t_{k}s^{\frac{k-q}{q}}$$
where $s$ is an indeterminate and $f_{1}$ and $f_{2}$ are viewed as elements of $\CC(\!(1/s^{1/p})\!)$ and $\CC(\!(1/s^{1/q})\!)$.
\end{thm}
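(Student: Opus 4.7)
The plan is to combine the Levelt--Turrittin normal form of $\nabla_{M_{p,q}(t)}$ established in Section \ref{KP-normal-section} with the stationary phase principle for the local Fourier transform $\mathcal F^{(\infty,\infty)}$ discussed just before the theorem. Recall that
$$\nabla_{M_{p,q}(t_1,\cdots,t_{p+q})} \cong [p]_{*} N_1, \quad N_1 = \Bigl(\CC(\!(u)\!),\, \frac{\textrm{d}}{\textrm{d}z} + \frac{p-1}{2z} - \sum_{i=1}^{p+q} i t_i z^{i-1}\Bigr),$$
and symmetrically $\nabla_{M_{q,p}(\hat t)} \cong [q]_{*} N_2$ with $N_2$ the analogous rank-$1$ connection built from the $\hat t_i$. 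Since $\gcd(p,q)=1$ both sides are irreducible, of slopes $(p+q)/p$ and $(p+q)/q$ respectively, and these two slopes are exchanged under the Fourier slope rule $\lambda \mapsto \lambda/(\lambda-1)$. The theorem therefore reduces to a single computation of $\mathcal F^{(\infty,\infty)}$ applied to the push-forward of an explicit rank-$1$ connection.

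To perform this computation I would invoke the stationary phase formula of Bloch--Esnault \cite{BE} and Fang \cite{FAN}, which asserts that for an irreducible push-forward $[p]_{*} E_f$ of slope $>1$ one has $\mathcal F^{(\infty,\infty)}([p]_{*} E_f) \cong [q]_{*} E_g$, where the exponents are related by a Legendre-type duality: writing $\phi$ for an antiderivative of the polar part of the exponent of $E_f$, so that the critical-point equation reads $w = \phi'(z)$, the exponent on the dual side is $\phi(z(w)) - z(w) w$. At the level of derivatives this dictionary is precisely compositional inversion. The polar part of the exponent of $N_1$, re-expressed in the coordinate $s=z^p$ adapted to the push-forward $[p]$, is (up to sign and the regular-singular contribution from $\tfrac{p-1}{2z}$) the function $f_1 = \tfrac{1}{p}\sum_{k=1}^{p+q} k t_k s^{(k-p)/p}$ of the theorem, so stationary phase produces $\mathcal F^{(\infty,\infty)}([p]_{*}N_1) \cong [q]_{*}\widetilde N$ whose polar exponent is the compositional inverse $f_1^{-1} \in \CC(\!(1/s^{1/q})\!)$.

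The final step is to identify $\widetilde N$ with $N_2$: I would verify that $f_1^{-1}$ agrees with $f_2 = -\tfrac{1}{q}\sum_{k=1}^{p+q} k \hat t_k s^{(k-q)/q}$ modulo $s^{-1}\CC[\![s^{-1/q}]\!]$, and symmetrically for $f_2^{-1}$. The boundary normalizations $t_{p+q} = p/(p+q) = -\tfrac{p}{q}\hat t_{p+q}$ and $t_{p+q-1} = \hat t_{p+q-1} = 0$ are calibrated so that the top two orders of $f_2 \circ f_1$ reduce to $s$, after which the remaining coefficients are fixed recursively by formal series inversion; this is exactly the statement that $f_1$ and $f_2$ are compositional inverses up to gauge equivalence. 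The regular-singular tail $\tfrac{p-1}{2z}$ of $N_1$ corresponds to the tensor factor $R$ in the Levelt--Turrittin presentation and transforms under $\mathcal F^{(\infty,\infty)}$ into the dual $\tfrac{q-1}{2z}$ by the standard Fourier formula for regular-singular twists. The principal obstacle is the bookkeeping in the stationary phase step: one must check that the gauge ambiguity in the Levelt--Turrittin representative of the Fourier output matches \emph{exactly} the ambiguity $s^{-1}\CC[\![s^{-1/q}]\!]$ admitted in the definition of compositional inverse, so that the theorem's pair of inversion equations provides a complete characterization of the relation between $(t_1,\cdots,t_{p+q})$ and $(\hat t_1,\cdots,\hat t_{p+q})$.
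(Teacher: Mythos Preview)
Your approach is essentially the same as the paper's: reduce via the Levelt--Turrittin form $\nabla_{M_{p,q}(t)}\cong[p]_*N_1$ and then apply the explicit stationary phase formula for $\mathcal F^{(\infty,\infty)}$ (Fang \cite{FAN}, Graham-Squire \cite{GRA}, Sabbah \cite{SAB}) to identify the output with $[q]_*N_2$. The paper organizes this slightly differently, first using the global Fourier duality $\mathcal F^{\textrm{glob}}[\D_2(P,Q)]\cong\D_1(P,Q)$ together with the Katz-extension description of $\mathcal F^{(\infty,\infty)}$ to guarantee that $\mathcal F^{(\infty,\infty)}\nabla_{M_{p,q}(t)}$ is \emph{a priori} of the form $\nabla_{M_{q,p}(\hat t)}$ for some $\hat t$, and only then invoking Fang's formula to extract the relation between $t$ and $\hat t$; but on the irregular part the two computations are the same.

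There is, however, one genuine point you underplay. Your final sentence asserts that the regular-singular tail $\tfrac{p-1}{2z}$ of $N_1$ ``transforms under $\mathcal F^{(\infty,\infty)}$ into the dual $\tfrac{q-1}{2z}$ by the standard Fourier formula for regular-singular twists.'' Fang's formula does not say this directly: it says the regular factor $R$ is \emph{preserved} and an extra shift of $\tfrac{p+q}{2}$ appears, i.e.\ $[p]_*([Z\partial_Z\alpha]\otimes R)\mapsto [q]_*([Z'\partial_{Z'}\beta+\tfrac{p+q}{2}]\otimes R)$. To turn this into the desired $\tfrac{q-1}{2z}$ you need to argue that the discrepancy is an integer and hence gauge-trivial on the rank-$1$ side, and that uses a parity condition. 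The paper handles this by a case split: if $p$ is odd then $\tfrac{1-p}{2p}\in\tfrac{1}{p}\ZZ$, so the regular part of $N_1$ may be absorbed into the gauge ambiguity before applying the formula; if instead $q$ is odd (and $p$ is even) the paper does not attempt the bookkeeping directly but uses the involution $\mathcal F^{(\infty,\infty)}\circ\mathcal F^{(\infty,\infty)}=\iota$ to reduce to the previous case with the roles of $p$ and $q$ swapped. Your proposal should either carry out this parity split or supply the missing half-integer cancellation explicitly; as written, the sentence about the regular-singular tail is an assertion rather than an argument.
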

\begin{proof}
The relation between $t_{p+q}$ and $\hat t_{p+q}$ has already been shown. We now prove the remaining parts of the theorem. From the previously mentioned global Fourier duality
$$\mathcal F^{\textrm{glob}} \big [  \D_{2}(P(t_{1}, t_{2},\cdots, t_{p+q}),Q(t_{1}, t_{2},\cdots, t_{p+q})) \big ] \cong \D_{1}(P(t_{1}, t_{2},\cdots, t_{p+q}), Q(t_{1}, t_{2},\cdots, t_{p+q}))$$
it follows from the stationary phase principle that
$$
\psi_{\infty} \D_{1}(P(t_{1}, t_{2},\cdots, t_{p+q}), Q(t_{1}, t_{2},\cdots, t_{p+q})) \cong  \mathcal F^{(\infty,\infty)}  \nabla_{M_{p,q}(t_{1},\cdots,t_{p+q})}  $$ 
Furthermore, by the discussion of Section \ref{KP-normal-section}, one also has
$$
\psi_{\infty} \D_{1}(P(t_{1}, t_{2},\cdots, t_{p+q}), Q(t_{1}, t_{2},\cdots, t_{p+q}))  \cong  \nabla_{M_{q,p}(\hat t_{1},\cdots, \hat t_{p+q})}$$
for some choices of $\hat t_{1},\cdots, \hat t_{p+q}$. We now relate these KP times concretely to the other set of KP times $t_{1},\cdots, t_{p+q}$. To do so, we formulate the explicit description of the local Fourier transform as obtained by Fang \cite{FAN}, Graham-Squire \cite{GRA}, and Sabbah \cite{SAB}. 

Let us first describe Fang's version of the result. For $Z=t^{-1/p}$ and $Z'=t'^{-1/q}$ and $\alpha \in \CC(\!(Z)\!)$ it follows from \cite{FAN} (Theorem 1.3) that for every regular one-dimensional connection $R$ one obtains
$$\mathcal F^{(\infty,\infty)} \left( [p]_{*}  \left( [Z\partial_{Z}(\alpha)] \otimes_{\CC(\!(Z)\!)} R \right) \right )\cong [q]_{*} \left( [ Z'\partial_{Z'}(\beta) + \frac{p+q}{2} ]\otimes_{\CC(\!(Z')\!)} R  \right )$$
where $\alpha$ and $\beta$ are related in the following manner:
$$\partial_{t} \alpha + t'=0$$
$$\alpha + tt'=\beta$$
Note that since $p+q>p$ the connection to which we apply the local Fourier transform has slope strictly bigger than one and so the results of \cite{FAN} do indeed apply. It follows from the above equations that
$$\partial_{t'} \beta = \partial_{t} \alpha \cdot \partial_{t'}t+ t+  t'\partial_{t'}t = (\partial_{t} \alpha + t') \partial_{t'}t + t=t$$
and
$$\partial_{t'} \beta = (-\partial_{t} \alpha)^{-1}(t')$$
where we consider the compositional inverse of $-\partial_{t}\alpha$ viewed as an element in $\CC(\!(1/t^{1/p})\!)$. Note that for example by \cite{GRA} (Lemma 5.1) such an inverse does indeed exists as a formal Laurent series. We now use this to obtain the time dynamics. 

For this purpose, only the irregular part of the relevant connections matters. This part is defined in the following manner:
For an irreducible connection $E_{f}$ with $f =\sum a_{i}t^{i/p}$ the sum $\sum_{i < 0} a_{i}t^{i/p}$ only depends on the isomorphism class of $E_{f}$. Therefore one can define in a well-defined manner the associated connection
$$(E_{f})_{\textrm{irreg}} := E_{\tilde f} \; \; \textrm{ where } \; \; \tilde f = \sum_{i <0} a_{i}t^{i/p}$$
Note that for $z=1/Z$ one has
$$[Z\partial_{Z}(\alpha)]=(\CC(\!(u)\!),\frac{\textrm{d}}{\textrm{d}z} -\frac{\partial_{Z}(\alpha)}{z^{2}} )$$
Hence, if a function $H$ satisfies
$$pz^{p-1}H(z)= -\frac{\partial_{Z}(\alpha)}{z^{2}}$$
then
$$\partial_{t}\alpha=H$$
It follows that the explicit form of the local Fourier transform implies
$$\left ( \mathcal F^{(\infty,\infty)} \left ( [p]_{*}(\CC(\!(u)\!),\frac{\textrm{d}}{\textrm{d} z} + pz^{p-1} H )\right ) \right )_{\textrm{irreg}} \cong \left( [q]_{*}(\CC(\!(u)\!),\frac{\textrm{d}}{\textrm{d} z} + qz^{q-1} (-H)^{-1})\right)_{\textrm{irreg}}$$
where $(-H)^{-1}$ denotes the compositional inverse of $-H$ viewed as a function of $1/z$, hence as a function of $1/s^{1/p}$ where
$$s:=z^{p}$$
The above described explicit formula for the local Fourier transform can also be verified by comparison to the formulas given by Graham-Squire in \cite{GRA} which were obtained by a different method than Fang's. Let $f$ be such that
$$E_{f} \cong  [p]_{*}  \left( [Z\partial_{Z}(\alpha)] \otimes_{\CC(\!(Z)\!)} R \right)$$
One has, compare to the discussion in \cite{GRA} (Section 5.1), that $z=t$, $\hat z=t'$ and
$$-t\partial_{t}\alpha=\frac{1}{t}\partial_{1/t} \alpha = f$$
and
$$-t'\partial_{t'} \beta=\frac{1}{t'}\partial_{1/t'} \beta = g  + \textrm{ regular terms}$$
Hence one obtains
$$f=z \hat z$$
which is the first of the equations obtained by Graham-Squire. Furthermore, one has
$$z=(\frac{f}{z})^{-1}(\hat z)=- \frac{g}{\hat z} + \textrm{ regular terms}$$
and this yields
$$g=-f + \textrm{ regular terms}$$
and this is the second of the two equations given in \cite{GRA}.

We now apply these generalities concerning the local Fourier transform to our concrete situation: Since $p$ and $q$ are co-prime, at least one of the numbers $p$ and $q$ is odd. Assume first that $p$ is odd. Recall that
$$\nabla_{M_{p,q}(t_{1},\cdots,t_{p+q})} \cong [p]_{*}\left( (\CC(\!(u)\!), \frac{\textrm{d}}{\textrm{d}z} - pz^{p-1}(
\frac{1-p}{2p}\frac{1}{z^{p}}+ \frac{1}{p}\sum_{i=1}^{p+q} i t_{i} z^{i-p})) \right )$$
Hence, since
$$\frac{1-p}{2p} \in \frac{1}{p}\ZZ,$$
it follows that
\begin{eqnarray*}
\psi_{\infty} \D_{1}(P(t_{1}, t_{2},\cdots, t_{p+q}), Q(t_{1}, t_{2},\cdots, t_{p+q}))  &\cong &  \nabla_{M_{q,p}(\hat t_{1},\cdots, \hat t_{p+q})} \\
& \cong & \mathcal F^{(\infty,\infty)}  [p]_{*}\left( (\CC(\!(u)\!), \frac{\textrm{d}}{\textrm{d}z} + pz^{p-1}(- \frac{1}{p}\sum_{i=1}^{p+q} i t_{i} z^{i-p})) \right )
\end{eqnarray*}
Therefore, if we define
$$H(s):= -\frac{1}{p}\sum_{i=1}^{p+q} i t_{i} s^{(i-p)/p},$$
then the coefficent of $1/s$ in $-H(s)$ is zero. It then follows for example from \cite{GRA} (Lemma 5.3) that the coefficient of $1/s$ in $(-H(s))^{-1}$ is zero as well. Therefore there are isomorphisms
$$\xymatrixrowsep{3.2pc}\xymatrix{  [q]_{*}\left( (\CC(\!(u)\!), \frac{\textrm{d}}{\textrm{d}z}  + qz^{q-1}(\frac{1}{p}\sum_{i=1}^{p+q} i
t_{i} s^{(i-p)/p})^{-1} \right )  \\
\left (  \mathcal F^{(\infty,\infty)}  [p]_{*}\left( (\CC(\!(u)\!), 
\frac{\textrm{d}}{\textrm{d}z} + pz^{p-1}(- \frac{1}{p}\sum_{i=1}
^{p+q} i t_{i} s^{(i-p)/p})) \right )   \right )_{\textrm{irreg}}  
\ar[u]_-{\cong} \ar[d]^-{\cong}    \\
[q]_{*}\left( (\CC(\!(u)\!), 
\frac{\textrm{d}}{\textrm{d}z} + qz^{q-1}(- \frac{1}{q}\sum_{i=1}
^{p+q} i \hat t_{i} s^{(i-q)/q})) \right ) }$$
Hence the following two functions are compositional inverses up to gauge equivalence:
$$\frac{1}{p}\sum_{k=1}^{p+q} k  t_{k} s^{\frac{k-p}{p}} \;\;\; \textrm{ and } \; \; \;  - \frac{1}{q}\sum_{k=1}^{p+q} k \hat t_{k} s^{\frac{k-q}{q}}$$
Assume now that $q$ is odd. 

We first recall a general result about the local Fourier transform: Let $a,b$ be indeterminates and denote by $\iota$ the pull-back map of connections on the formal punctured disc along the map
$$\CC(\!(a)\!) \longrightarrow \CC(\!(b)\!) \; \; \; \textrm{ with } \;\;\; a \mapsto -b $$
Suppose given a connection $(M,\nabla)$. The pull back is then given by the $\CC(\!(a)\!)$-vector space $\CC(\!(a)\!) \otimes_{\CC(\!(b)\!)} M$ with the connection map that satisfies
$$1\otimes m \mapsto - 1 \otimes \nabla(m)$$ 
Therefore one obtains
$$\iota \left ([p]_{*}(\CC(\!(u)\!),\frac{\textrm{d}}{\textrm{d}z}+h(z)) \right) \cong [p]_{*}(\CC(\!(u)\!),\frac{\textrm{d}}{\textrm{d}z}-h(\zeta z))$$
where 
$$\zeta^{p}=-1$$
The involution $\iota$ is related to the local Fourier transform: By \cite{BE} (Proposition 3.12 (iv)) one has 
$$\mathcal F^{(\infty,\infty)} \circ \mathcal F^{(\infty,\infty)} = \iota$$ 
It follows that there are isomorphisms
$$\xymatrixrowsep{3.2pc}\xymatrix{ \mathcal F^{(\infty,\infty)}  \nabla_{M_{q,p}(\hat{t}_{1},\hat{t}_{2},\cdots,\hat t_{p+q})} \ar[d]^-{\cong}\\
\iota \nabla_{M_{p,q}(t_{1},t_{2},\cdots, t_{p+q})} \ar[d]^-{\cong} \\
[p]_{*} \left( \CC(\!(u)\!), \frac{\textrm{d}}{\textrm{d}z}  + pz^{p-1}(\frac{1}{p}\sum_{i=1}^{p+q} i t_{i} (\zeta z)^{i-p}))\right ) }$$
where $\zeta^{p}=-1$.
Here for the second isomorphism we have used that $q$ is odd.
It now follows by a similar reasoning as before that the following two functions are compositional inverses up to gauge equivalence:
$$\frac{1}{p}\sum_{k=1}^{p+q} k  t_{k} (-s)^{\frac{k-p}{p}} \;\;\; \textrm{ and } \; \; \;   \frac{1}{q}\sum_{k=1}^{p+q} k \hat t_{k} s^{\frac{k-q}{q}}$$
This again implies the desired result.
\end{proof}

\section{Application to p -- q duality of 2D quantum gravity}
\label{physics-section}
In this section we apply the previous results to give a conceptual proof, based on the local Fourier transform, of the duality results of Fukuma-Kawai-Nakayama \cite{FKN} concerning 2D quantum gravity.

For positive co-prime integers $p$ and $q$ there exists the so-called $(p,q)$ model of 2D quantum gravity (more precisely one should talk about $(p,q)$ minimal model coupled to 2D gravity). This theory has a partition function $\textrm{Z}_{p,q}$ that can be expressed in terms of a $\tau$-function of the KP hierarchy.
The crucial fact is  that the partition function $\textrm{Z}_{p,q}$ is the square of a function $\tau_{p,q}$ that is the $\tau$-function of the KP hierarchy satisfying certain Virasoro constraints. 

The $(p,q)$ models for varying $p$ and $q$ are not unrelated: There exists a certain duality between the $(p,q)$ and the $(q,p)$ theory. This so-called p -- q duality can be expressed as a relation between the two $\tau$-functions $\tau_{p,q}$ and $\tau_{q,p}$. One of the nice consequences of the duality is that it allows to describe a  theory by a simpler one: The $(p,q)$ model depends non-trivially only on the KP times $t_{i}$ with
$$1 \le i \le p + q \; \; \textrm{ and } \; \; \;  i \not \in \ZZ p$$
Therefore, the p -- q duality can simplify the study of the $(p,q)$ models. For example, the $(3,2)$ model depends a priori on four time variables but via the 3 -- 2 duality it can be expressed in terms of three parameters.

In \cite{LUU} it was shown that the p -- q duality can be expressed in terms of the local Fourier duality of certain connections. We now use the results of the previous section to give a more conceptual proof of this fact and furthermore we will give a Fourier theoretic proof of the results of Fukuma-Kawai-Nakayama   concerning the dynamics of the duality.

It is known that $(p,q)$ theory corresponds to a family of solutions to the string equation 
$$[P(t_{1},\cdots, t_{p+q}),Q(t_{1},\cdots, t_{p+q})]=1 \; \; \; \textrm{ where } \;\;\;
P(t_{1},\cdots, t_{p+q})=\partial_{x}^{p}+ a_{p-2} \partial _{x}^{p-2}+\cdots + a_{0}$$ 
is a differential operator of order $p$ and  $Q$ is a differential operator of order $q$.  
 The method  that allows us to describe the corresponding points   of the Sato Grassmannian  was given in \cite {KS}. For more details see \cite {SCH2}, \cite {KM} or \cite {FKN}.  In this description, the $\tau$-function $\tau_{p,q}$ of the theory is known to satisfy
$$\frac{\partial^{2}}{\partial t_{1}^{2}} \ln \tau_{p,q}=\frac{a_{p-2}}{p}$$ 
It is shown in \cite{FKN} that one simply has
$$\frac{\partial^{2}}{\partial t_{1}^{2}} \ln \tau_{q,p}= \frac{\partial^{2}}{\partial t_{1}^{2}} \ln \tau_{p,q} + C(t_{1},\cdots,t_{p+q})$$
where the correction term is given by
$$C(t_{1},\cdots,t_{p+q}) = \frac{1-q}{2} \cdot \frac{p+q-1}{q(p+q)} \cdot \left (\frac{t_{p+q-1}}{t_{p+q}} \right )^{2}+\frac{p+q-2}{q(p+q)} \cdot \frac{t_{p+q-2}}{t_{p+q}}$$
Note in particular the case where $p$ and $q$ are such that one of the following holds: 
\begin{enumerate}[(i)]
\item
$q \ne 1$ and $q \equiv 1 \mod p$
\item
$q \equiv 2 \mod p$
\end{enumerate}
In these cases the function $\frac{\partial^{2}}{\partial t_{1}^{2}} \ln \tau_{p,q}$ does not depend on one of the variables $t_{p+q-1}$ or $t_{p+q-2}$ and one can specialize the value of this time variable in such a manner that the correction term vanishes:
$$C(t_{1},\cdots,t_{p+q}) \equiv 0$$
To really relate the second derivatives of the $\tau$-functions it is crucial to obtain a relation between the two sets of KP times. This  time dynamics of the duality was obtained by Fukuma-Kawai-Nakayama in \cite{FKN}. We now give a conceptual proof of the relation between the $(p,q)$ times and the $(q,p)$ times based on properties of the local Fourier transform. 

\begin{thm}[Fukuma-Kawai-Nakayama \cite{FKN}]
\label{time-relation-theorem}
Let $p$ and $q$ be positive co-prime integers. Define
$$a_{p+q-k}=\frac{k t_{k}}{(p+q)t_{p+q}} \; \; \textrm{ and } \; \; \; \hat a_{p+q-k}=\frac{k \hat t_{k}}{(p+q)\hat t_{p+q}}$$
Then
$$\hat t_{p+q}= -\frac{q}{p}\cdot t_{p+q}$$
and there are values $a_{n}, \hat a_{n}$ extending the above definition to all $n\ge 1$ such that following two functions are compositional inverses:
$$g_{1}=z\left(1+\sum_{n=1}^{\infty} a_{n} z^{-n} \right)^{1/q} $$
and
$$g_{2}=z\left(1+\sum_{n=1}^{\infty}\hat a_{n} z^{-n} \right)^{1/p}$$
\end{thm}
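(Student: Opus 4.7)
The plan is to deduce this from Theorem \ref{main-theorem} by converting $g_1, g_2$ into the functions $f_1, f_2$ appearing there. In the setting of that theorem, $t_{p+q} = \frac{p}{p+q}$ and $\hat t_{p+q} = -\frac{q}{p+q}$, so $\hat t_{p+q} = -\frac{q}{p} t_{p+q}$ is immediate; moreover the common scaling factor $\frac{p}{(p+q) t_{p+q}} = \frac{-q}{(p+q) \hat t_{p+q}}$ equals $1$. I would first raise $g_1$ to the $q$-th power and $g_2$ to the $p$-th power and group by powers of $z$: the definitions of $a_{p+q-k}$ and $\hat a_{p+q-k}$ yield
\begin{equation*}
g_1(z)^q = f_1(z^p) + \epsilon_1(z), \qquad g_2(z)^p = f_2(z^q) + \epsilon_2(z),
\end{equation*}
where $\epsilon_1 \in z^{-p}\CC[\![z^{-1}]\!]$ and $\epsilon_2 \in z^{-q}\CC[\![z^{-1}]\!]$ are tail terms arising exclusively from the as-yet undetermined extended coefficients $a_n, \hat a_n$ for $n \ge p+q$, whereas the ``defined'' portions of $g_1^q$ and $g_2^p$ coincide with $f_1(z^p)$ and $f_2(z^q)$ exactly.

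Next I would invoke Theorem \ref{main-theorem} to write $f_1^{-1}(s) = f_2(s) + \delta(s)$ with $\delta \in s^{-1}\CC[\![s^{-1/q}]\!]$, and plug this into the inversion of $w = g_1(z)$. Solving $w^q = f_1(z^p) + \epsilon_1(z)$ for $z^p$ yields
\begin{equation*}
z^p = f_1^{-1}(w^q - \epsilon_1(z)) = f_2(w^q - \epsilon_1(z)) + \delta(w^q - \epsilon_1(z)).
\end{equation*}
Taylor-expanding $f_2$ around $w^q$ and using the order estimates $f_2'(w^q) \sim \frac{p}{q} w^{p-q}$, $\epsilon_1(z) \in w^{-p}\CC[\![w^{-1}]\!]$, and $\delta(w^q) \in w^{-q}\CC[\![w^{-1}]\!]$, one sees that every correction term lies in $w^{-q}\CC[\![w^{-1}]\!]$. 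Combined with $f_2(w^q) = g_2(w)^p - \epsilon_2(w)$ this gives $g_1^{-1}(w)^p - g_2(w)^p \in w^{-q}\CC[\![w^{-1}]\!]$, and taking $p$-th roots (both sides have leading term $w$) yields $g_1^{-1}(w) - g_2(w) \in w^{1-p-q}\CC[\![w^{-1}]\!]$, equivalently
\begin{equation*}
g_1(w) - g_2^{-1}(w) \in w^{1-p-q}\CC[\![w^{-1}]\!].
\end{equation*}

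Finally, since the coefficient of $w^{-m}$ in $g_1 = w(1 + \sum_n a_n w^{-n})^{1/q}$ depends only on $a_1, \ldots, a_{m+1}$ (and analogously, via Lagrange inversion, for $g_2^{-1}$), the above estimate shows that the coefficients of $w, w^0, w^{-1}, \ldots, w^{-(p+q-2)}$ of $g_1$ and of $g_2^{-1}$ --- precisely those governed by the ``defined'' $a_n, \hat a_n$ for $1 \le n \le p+q-1$ --- automatically coincide. For the coefficients of $w^{-(p+q-1)}$ and below, one chooses the extensions $\hat a_n$ ($n \ge p+q$) arbitrarily, thereby determining $g_2$ and hence $g_2^{-1}$, and then solves inductively for the extended $a_n$ to force $g_1 = g_2^{-1}$ exactly. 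I expect the main obstacle to be the order-of-vanishing bookkeeping: one must verify that every term $f_2^{(k)}(w^q)\epsilon_1(z)^k$ and $\delta^{(k)}(w^q)\epsilon_1(z)^k$ in the Taylor expansion lies in $w^{-q}\CC[\![w^{-1}]\!]$, so that the bound $w^{1-p-q}$ is sharp and matches exactly the threshold where the defined coefficients give way to the freely extendable ones.
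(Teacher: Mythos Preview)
Your approach is correct and follows the same overall strategy as the paper --- deduce everything from Theorem~\ref{main-theorem} via the identities $g_1(z)^q \approx f_1(z^p)$ and $g_2(z)^p \approx f_2(z^q)$ --- but your execution is considerably more laborious than necessary. The paper avoids all of your order-of-vanishing bookkeeping by making the extension choice \emph{first} rather than last: it simply sets $\hat a_n = 0$ for $n \ge p+q$, so that $g_2(u)^p = f_2(u^q)$ holds \emph{exactly}. Then the compositional inverse satisfies $g_2^{-1}(u)^q = f_2^{-1}(u^p)$ exactly, and Theorem~\ref{main-theorem} gives $f_2^{-1}(u^p) = f_1(u^p) + \sum_{n \ge p} b_n u^{-n}$; factoring out $u^q$ one reads off $g_2^{-1}(u) = u\bigl(1 + \sum_{n=1}^{p+q-1} a_n u^{-n} + \sum_{n \ge p} b_n u^{-n-q}\bigr)^{1/q}$, which is already of the form $g_1$ with the extended $a_n$'s defined by the $b$-tail. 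No Taylor expansion, no tracking of $\epsilon_1$, $\epsilon_2$, $\delta$, and no separate inductive matching step are needed.

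What your version buys is the slightly stronger observation that the agreement of the first $p+q-1$ coefficients of $g_1$ and $g_2^{-1}$ is independent of how either side is extended; the paper's argument establishes existence of a compatible extension but does not make this uniformity explicit. If you only want the theorem as stated, the paper's shortcut is preferable; your bookkeeping is sound, but the anticipated ``main obstacle'' disappears entirely once you fix $\hat a_n = 0$ up front.
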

\begin{proof}
We will be able to deduce this result from Theorem \ref{main-theorem}. 
It is  clear that by Theorem \ref{main-theorem} the times $t_{1},\cdots, t_{p+q}$ and $\hat t_{1},\cdots, \hat t_{p+q}$ can be related via the local Fourier transform (see (\ref{mt})).
Let $f_{1}$ and $f_{2}$ be as in the statement of Theorem \ref{main-theorem}. Namely
$$f_{1}=\frac{1}{p} \cdot \sum_{k=1}^{p+q} k t_{k}u^{\frac{k-p}{p}}$$
and
$$f_{2}=-\frac{1}{q} \cdot \sum_{k=1}^{p+q} k \hat t_{k}u^{\frac{k-q}{q}}$$
Setting $\hat  a_{n}=0$ for all $n \ge p+q$ one sees
$$g_{2}(u)=\left( f_{2}(u^{q}) \right)^{1/p}$$ 
Then one sees that
$$g_{2}(u)^{-1}= (f_{2}^{-1}(u^{p}) )^{1/q}=\left( f_{1}(u^{p}) +\sum_{n=p}^{\infty} b_{n}u^{-n} \right)^{1/q}$$ 
for some values $b_{n}$. Hence 
$$g_{2}(u)^{-1}= u \left ( 1 + \sum_{n=1}^{p+q-1} a_{n}z^{-n} +\sum_{n=p}^{\infty} b_{n}u^{-n-q} \right)^{1/q}$$ 
Therefore one can define $g_{1}$ by $g_{1} : = g_{2}^{-1}$ and one obtains the desired result. 
\end{proof}
Via standard techniques involving formulas for inverse functions one can make the time relation even more explicit. This has been carried out by Fukuma-Kawai-Nakayama in \cite{FKN} and the general formula is given by
$$a_{n}=-\frac{q}{p} \cdot \sum_{k \ge 1} \frac{1}{k} \begin{pmatrix}
(n-p-q)/p\\ k-1 \end{pmatrix} \sum_{m_{1},\cdots,m_{k} \ge 1, \sum m_{i}=n} \hat a_{m_{1}} \cdots \hat a_{m_{k}}$$
Consider for example the $2 - 5$ duality:

Choosing $q=2$ one can let the correction term vanish. The time variables $t_{1}, \cdots,t_{7}$ of the $(5,2)$ model can be described in terms of the time variables $\hat t_{1}, \cdots , \hat t_{7}$ of the $(2,5)$ model. We set
$$t_{7}=\frac{5}{7}$$ 
and hence
$$\hat t_{7}=-\frac{2}{7}$$
Then one can calculate the formulas for the 2 -- 5 duality that were presented in the introduction.

Such dynamics were previously known only through explicit calculations with pseudodifferential operators and hence it was difficult to understand their underlying meaning. Due to the present work, the formulas can be understood conceptually as the dynamics underlying the local Fourier transform of connections on the formal punctured disc.

\textbf{Acknowledgements:}

It is a pleasure to thank  V. Vologodsky, A. Graham-Squire, M. Bergvelt for useful exchanges and the referee for helpful remarks and corrections.

\Addresses

\end{document}